\newtheorem{thm}{Theorem}[subsection]
\newtheorem{lem}[thm]{Lemma}
\begin{document}

\title[Generalized Differential Transfer Matrices]
{Differential Transfer Matrix Solution of Generalized Eigenvalue Problems}

\author{Sina Khorasani}
\address{School of Electrical Engineering, Sharif University of Technology}
\email{khorasani@sina.sharif.edu}






\begin{abstract}
We report a new analytical method for solution of a wide class of
second-order differential equations with eigenvalues replaced by
arbitrary functions. Such classes of problems occur frequently in 
Quantum Mechanics and Optics. This approach is based on the extension of the
previously reported differential transfer matrix method with modified
basis functions. Applications of the method to boundary value and
initial value problems, as well as several examples are illustrated. 
\end{abstract}

\maketitle

\section{Introduction}

Analytical solution of differential equations provides insight to the
behavior of solutions whenever they exist. Unfortunately, most
practical physical problems are described by governing models which are
normally solved by numerical techniques. Moreover, explicit solutions
even to the simplest differential equations are rare. A wide class of
physical problems are described by second-order differential equations,
where the only known analytical method with explicit solution for this
purpose is the approximate Wentzel\textendash{}Kramers\textendash{}Brillouin (WKB) 
method \cite{ref1,ref2}. An approximate method for solution of linear homogeneous
differential equations with variable coefficients has been reported in \cite{ref2a},
which is based on a transformation into a Volterra or Fredholm integral equation. 
Also, a matrix method has been reported in \cite{ref2b} which is used to transform the 
Schr\"{o}dinger equation into a Shabat–Zakharov system of second order, and 
then the solution is obtained by perturbation technique.
There are also a number of further existing analytical methods for solution of
Ordinary Differential Equations which are categorized and reviewed by Polyanin and 
Zaitsev in \cite{ref2c}.

Recently, we have introduced an analytical method which is capable of
solving linear homogeneous Ordinary Differential Equations (ODEs) with
variable coefficients \cite{ref3}. The method is based on the definition of jump transfer matrices and
taking the differential limit. The approach reduces the
$n$th-order differential equation to a first order system of
$n$ linear differential equations.
The full analytical solution is then found by the perturbation
technique, which may be elegantly expressed in terms of matrix
exponentiation of the integral of a Kernel matrix. The important
feature of this method is that it deals with the evolution of
independent solutions, rather than its derivatives. The exact
mathematical nature of this method has also been rigorously established
\cite{ref3}.

This method emerged as an extension of Differential Transfer Matrix
Method (DTMM), which was originally proposed in the context of optics
and quantum mechanics \cite{ref4,ref5,ref6}. DTMM is based on the modification of the
standard transfer matrix method in optics \cite{ref7} and quantum mechanics \cite{ref8},
and employs exponential basis functions. Through numerous examples,
this method was shown to be simple, exact, and efficient, while
reflecting the basic properties of the physical problem.

However, the initial formulation of DTMM had difficulty when dealing
with singularities. Such singularities arise in the domain of optics
and quantum mechanics at the turning points of wavefunctions, where the
behavior switches between oscillatory and decaying forms \cite{ref2}.
As a result, basic DTMM results suffer from numerical error when
approaching physical singularities. To overcome this difficulty, the
solution may be expressed by using Airy functions \cite{ref2}. Further
applications and extensions of DTMM have been reported by
Mehrany et. al., which include employment of WKB basis \cite{ref9,ref10} for
improvement of numerical accuracy, and numerical implementation of the
method using Airy functions \cite{ref11} for dealing with singularities. Also,
the original DTMM has been used in \cite{ref12} with no essential change as
reported in the main formulation \cite{ref3}.

In this paper, we report a general formulation of DTMM for eigenvalue
problems having second-order ODEs. The generalization is done by
replacing the eigenvalue with an arbitrary function, which we here
refer to as the eigenvalue function. We show that DTMM is still capable
of delivering the analytical solution provided that proper basis
functions are used. We also establish the mathematical validity of this
method by presenting the fundamental theorem of DTMM. The major
improvement of this work in contrast to the original formulation
\cite{ref3} is two-fold. Firstly, the solution to the extended eigenvalue problem
evolves `naturally' out of the relevant basis functions rather than
exponentials. Secondly, envelope functions undergo minimal variations
because most of the solution is encompassed in the extended bases. As a
result, we may obtain a simple approximate solution for problems with
slowly varying eigenvalue functions.

\section{Formulation}

Suppose that a linear homogeneous ordinary differential equation (ODE)
is given as

\begin{equation}
\mathbb{H}_k f\left( {x;k} \right) = 0
\label{eq1}
\end{equation}

\noindent
where $\mathbb{H}_k $ is a second-order linear
operator with eigenvalue $k$ which involves partial
derivatives only with respect to $x$, and
$f\left( {x;k} \right)$ being its eigenfunction. It
is the purpose of this formulation to find solutions of (\ref{eq1})
when the eigenvalue $k$ is replaced by an
eigenvalue function such as $k\left( x \right)$.
Hence, the extended problem reads

\begin{equation}
\mathbb{L}_{k\left( x \right)} f\left[ {x;k\left( x \right)} \right] = 0
\label{eq2}
\end{equation}

\noindent
Here, the new operator $\mathbb{L}_{k\left( x \right)}
$ is obtained from $\mathbb{H}_k $ by
replacing the eigenvalue $k$ with an eigenvalue
function $k\left( x \right)$, and
${\partial  \mathord{\left/
{\vphantom {\partial  {\partial x}}} \right.
\kern-\nulldelimiterspace} {\partial x}}$ with
${d \mathord{\left/
{\vphantom {d {dx}}} \right.
\kern-\nulldelimiterspace} {dx}}$. In general, (\ref{eq1})
admits a general solution having the form

\begin{equation}
f\left( {x;k} \right) = aA\left( {x;k} \right) + bB\left( {x;k}
\right)
\label{eq3}
\end{equation}

\noindent
in which $a$ and $b$ are constants determined by initial or boundary conditions, and
$A\left( {x;k} \right)$ and $B\left({x;k} \right)$ are linearly independent solutions with
non-vanishing Wronskian determinant, i.e.

\begin{equation}
W\left( {x;k} \right) = \left| {\begin{array}{*{20}c}
{A\left( {x;k} \right)} & {B\left( {x;k} \right)}  \\
{A_x \left( {x;k} \right)} & {B_x \left( {x;k} \right)}  \\
\end{array}} \right| \ne 0
\label{eq4}
\end{equation}

\noindent
Clearly, $A_x \left( {x;k} \right) = {{\partial A\left(
{x;k} \right)} \mathord{\left/
{\vphantom {{\partial A\left( {x;k} \right)} {\partial x}}} \right.
\kern-\nulldelimiterspace} {\partial x}}$ and
$B_x \left( {x;k} \right) = {{\partial B\left( {x;k}
\right)} \mathord{\left/
{\vphantom {{\partial B\left( {x;k} \right)} {\partial x}}} \right.
\kern-\nulldelimiterspace} {\partial x}}$.

Now, we seek a solution to (\ref{eq2}) having the extended form to (\ref{eq3})
given by

\begin{equation}
f\left( {x;k} \right) = a\left( x \right)A\left[ {x;k\left( x \right)}
\right] + b\left( x \right)B\left[ {x;k\left( x \right)} \right]
\label{eq5}
\end{equation}

\noindent
where $a\left( x \right)$ and $b\left(x \right)$ remain as unknown functions to be determined.

\subsection{Transfer Matrix of Finite Jumps}

In order to proceed with the formulation, we take a similar approach
to the conventional Differential Transfer Matrix Method with
exponential basis. For this reason, we first need to obtain the
transfer matrix of finite jumps in the eigenvalue
$k$. Suppose that the eigenvalue function is
$k\left( x \right)$ defined as

\begin{equation}
k\left( x \right) = \left\{ {\begin{array}{*{20}c}
{k_1 ,} & {x > X}  \\
{k_2 ,} & {x < X}  \\
\end{array}} \right.
\label{eq6}
\end{equation}

\noindent
with $k_1 $, $k_2 $, and $X$ being constants. Solution to (\ref{eq1})
is readily given by (\ref{eq3}) as

\begin{equation}
\begin{array}{l}
f\left[ {x;k\left( x \right)} \right] = \left\{
{\begin{array}{*{20}c}
{f_1 \left( x \right),} & {x < X}  \\
{f_2 \left( x \right),} & {x > X}  \\
\end{array}} \right. \\
\quad \quad \;\;\;\, = \left\{ {\begin{array}{*{20}c}
{a_1 A\left( {x;k_1 } \right) + b_1 B\left( {x;k_1 } \right),} & {x
< X}  \\
{a_2 A\left( {x;k_2 } \right) + b_2 B\left( {x;k_2 } \right),} & {x
> X}  \\
\end{array}} \right. \\
\end{array}
\label{eq7}
\end{equation}

Analyticity of $f\left[ {x;k\left( x \right)}\right]$ across requires that

\begin{equation}
\begin{array}{l}
f_1 \left( X \right) = f_2 \left( X \right) \\
f_1 ^\prime  \left( X \right) = f_2 ^\prime  \left( X \right) \\
\end{array}
\label{eq8}
\end{equation}

\noindent
We hence arrive in the system of equations

\begin{equation}
\begin{array}{l}
\left[ {\begin{array}{*{20}c}
{A\left( {X;k_2 } \right)} & {B\left( {X;k_2 } \right)}  \\
{A_x \left( {X;k_2 } \right)} & {B_x \left( {X;k_2 } \right)}  \\
\end{array}} \right]\left\{ {\begin{array}{*{20}c}
{a_2 }  \\
{b_2 }  \\
\end{array}} \right\} = \\
\quad \quad \;\;\;\,
\left[ {\begin{array}{*{20}c}
{A\left( {X;k_1 } \right)} & {B\left( {X;k_1 } \right)}  \\
{A_x \left( {X;k_1 } \right)} & {B_x \left( {X;k_1 } \right)}  \\
\end{array}} \right]\left\{ {\begin{array}{*{20}c}
{a_1 }  \\
{b_1 }  \\
\end{array}} \right\}
\end{array}
\label{eq9}
\end{equation}

\noindent
Since the Wronskian $W$ is supposed to be non-zero by (\ref{eq4}), then (\ref{eq9})
can be solved to obtain

\begin{equation}
\begin{array}{l}
\left\{ {\begin{array}{*{20}c}
{a_2 }  \\
{b_2 }  \\
\end{array}} \right\} = \left[ {\begin{array}{*{20}c}
{A_2 } & {B_2 }  \\
{A_2 ^\prime  } & {B_2 ^\prime  }  \\
\end{array}} \right]^{ - 1} \left[ {\begin{array}{*{20}c}
{A_1 } & {B_1 }  \\
{A_1 ^\prime  } & {B_1 ^\prime  }  \\
\end{array}} \right]\left\{ {\begin{array}{*{20}c}
{a_1 }  \\
{b_1 }  \\
\end{array}} \right\} = \\
\quad \quad \;\;\;\,
\frac{1}{{W_2 }}\left[ {\begin{array}{*{20}c}
{B_2 ^\prime  } & { - B_2 }  \\
{ - A_2 ^\prime  } & {A_2 }  \\
\end{array}} \right]\left[ {\begin{array}{*{20}c}
{A_1 } & {B_1 }  \\
{A_1 ^\prime  } & {B_1 ^\prime  }  \\
\end{array}} \right]\left\{ {\begin{array}{*{20}c}
{a_1 }  \\
{b_1 }  \\
\end{array}} \right\}
\end{array}
\label{eq10}
\end{equation}

\noindent
Here, $A_i $ and $A_i ^\prime$ respectively denote $A\left( {X;k_i }
\right)$ and $A_x \left( {X;k_i }\right)$, $i = 1,2$. Similarly,
$B_i $ and $B_i ^\prime$ respectively denote $B\left( {X;k_i } \right)$ and
$B_x \left( {X;k_i } \right)$, $i =1,2$. Also, $W_2  = W\left( {X;k_2 }
\right)$. Hence, we get

\begin{equation}
\left\{ {C_2 } \right\} = \left[ {Q^{1 \to 2} } \right]\left\{ {C_1 }
\right\} = \left[ {\begin{array}{*{20}c}
{q_{11}^{1 \to 2} } & {q_{12}^{1 \to 2} }  \\
{q_{21}^{1 \to 2} } & {q_{22}^{1 \to 2} }  \\
\end{array}} \right]\left\{ {C_1 } \right\}
\label{eq11}
\end{equation}

\noindent
Here, $\left[ {Q^{1 \to 2} } \right]$ is referred
to as the jump transfer matrix with the elements

\begin{equation}
\begin{array}{*{20}c}
{q_{11}^{1 \to 2}  = \frac{{B_2 ^\prime  A_1  - B_2 A_1 ^\prime
}}{{W_2 }}} & {q_{12}^{1 \to 2}  = \frac{{B_2 ^\prime  B_1  - B_2 B_1
^\prime  }}{{W_2 }}}  \\
{q_{21}^{1 \to 2}  = \frac{{A_1 ^\prime  A_2  - A_1 A_2 ^\prime
}}{{W_2 }}} & {q_{22}^{1 \to 2}  = \frac{{B_1 ^\prime  A_2  - B_1 A_2
^\prime  }}{{W_2 }}}  \\
\end{array}
\label{eq12}
\end{equation}

\noindent
and

\begin{equation}
\left\{ {C_i } \right\} = \left\{ {\begin{array}{*{20}c}
{a_i }  \\
{b_i }  \\
\end{array}} \right\},i = 1,2
\label{eq13}
\end{equation}

Properties of jump transfer matrices have been extensively discussed
in earlier works. But for the sake of convenience we mention a few

\begin{equation}
\begin{array}{l}
\left| {Q^{1 \to 2} } \right| = \frac{{W_1 }}{{W_2 }} \\
\left[ {Q^{m \to n} } \right] = \left[ {Q^{n - 1 \to n} }
\right]\left[ {Q^{n - 2 \to n - 1} } \right] \cdots \left[ {Q^{m + 1
\to m + 2} } \right]\left[ {Q^{m \to m + 1} } \right] \\
\left[ {Q^{m \to n} } \right]^{ - 1}  = \left[ {Q^{n \to m} } \right]
\\
\end{array}
\label{eq14}
\end{equation}

\noindent
corresponding respectively to the determinant, combination, and
inversion properties. Moreover, we readily notice that
$\left| {Q^{m \to n} } \right| = {{W_m } \mathord{\left/
{\vphantom {{W_m } {W_n }}} \right.
\kern-\nulldelimiterspace} {W_n }}$. It is clear that a
given transfer matrix $\left[ {Q^{1 \to 2} }
\right]$ is not invertible unless the Wronskian (\ref{eq4})
does not vanish. The combination property explains how to obtain the
total transfer matrix over a number of finite jumps, among which the
eigenvalue function $k\left( x \right)$ is
constant.

\subsection{Differential Transfer Matrix}

Now, we let $k\left( x \right)$ be a smooth
function of $x$. Within the infinitesimal
neighborhood of any given point such as $x = X$,
the eigenvalue function $k\left( x \right)$ will
undergo a first-order change from $k_1  = k\left( X
\right)$ to $k_2  = k\left( {X + \Delta x}
\right)$. We may then define $k_2  = k_1  + \Delta
k$, where $\Delta k$ represents a small
change in the eigenvalue. If $\Delta x$ is small,
then we may neglect the variations of $k\left( x
\right)$ within $\left[ {X,X + \Delta x}
\right]$. The corresponding first order change in the vector
$\left\{ C \right\}$ across $x =
X$ will be clearly given by

\begin{equation}
\left\{ {\Delta C} \right\} \approx \frac{1}{{\Delta x}}\left( {\left[
{Q^{1 \to 2} } \right] - \left[ I \right]} \right)\left\{ {C_1 }
\right\}
\label{eq15}
\end{equation}

\noindent
where the approximation becomes exact if we let $\Delta
x$ approach zero. Thereby, we get

\begin{equation}
d\left\{ {C\left( x \right)} \right\} = \left[ {U\left( x \right)}
\right]\left\{ {C\left( x \right)} \right\}dx
\label{eq16}
\end{equation}

\noindent
in which $\left\{ {C\left( x \right)} \right\}$ is
the envelope vector function, and the Kernel matrix $\left[
{U\left( x \right)} \right]$ is

\begin{equation}
\left[ {U\left( x \right)} \right] = \left[ {\begin{array}{*{20}c}
{u_{11} \left( x \right)} & {u_{12} \left( x \right)}  \\
{u_{21} \left( x \right)} & {u_{22} \left( x \right)}  \\
\end{array}} \right] = \lim _{\Delta x \to 0} \frac{1}{{\Delta
x}}\left( {\left[ {Q^{1 \to 2} } \right] - \left[ I \right]} \right)
\label{eq17}
\end{equation}

We notice that in order to obtain the correct solution to the Kernel
matrix $\left[ {U\left( x \right)} \right]$, one
needs to make the replacements $G_1  = G$,
$G_1 ^\prime   = G_x $, and $G_2  = G +
G_k \Delta k$, $G_2 ^\prime   = G_x  + G_{xk}
\Delta k$, where represents either of $A$
or $B$. Here, subscripts refer to partial
derivatives in the sense that $G_x  = {{\partial G}
\mathord{\left/
{\vphantom {{\partial G} {\partial x}}} \right.
\kern-\nulldelimiterspace} {\partial x}}$, $G_k
= {{\partial G} \mathord{\left/
{\vphantom {{\partial G} {\partial k}}} \right.
\kern-\nulldelimiterspace} {\partial k}}$,
$G_{xk}  = {{\partial ^2 G} \mathord{\left/
{\vphantom {{\partial ^2 G} {\partial x\partial k}}} \right.
\kern-\nulldelimiterspace} {\partial x\partial k}}$. After
doing some algebra and simplification we get the complete form for the
elements of the Kernel matrix $\left[ {U\left( x \right)}
\right]$ as

\begin{equation}
\begin{array}{*{20}c}
{u_{11}  = k'\frac{{A_{xk} B - A_k B_x }}{W}} & {u_{12}  =
k'\frac{{B_{xk} B - B_k B_x }}{W}}  \\
{u_{21}  = k'\frac{{A_k A_x  - A_{xk} A}}{W}} & {u_{22}  =
k'\frac{{A_x B_k  - AB_{xk} }}{W}}  \\
\end{array}
\label{eq18}
\end{equation}

\noindent
Here, $k' = {{\partial k\left( x \right)} \mathord{\left/
{\vphantom {{\partial k\left( x \right)} {\partial x}}} \right.
\kern-\nulldelimiterspace} {\partial x}}$ and $W
= AB_x  - A_x B$. In the fully expanded form we have

\begin{equation}
\begin{array}{l}
u_{11} \left( x \right) = \frac{{k'\left( x \right)}}{{W\left( x
\right)}}\left\{ {\frac{{\partial ^2 A\left[ {x;k\left( x \right)}
\right]}}{{\partial x\partial k}}B\left[ {x;k\left( x \right)} \right]
- \frac{{\partial A\left[ {x;k\left( x \right)} \right]}}{{\partial
k}}\frac{{\partial B\left[ {x;k\left( x \right)} \right]}}{{\partial
x}}} \right\} \\
u_{12} \left( x \right) = \frac{{k'\left( x \right)}}{{W\left( x
\right)}}\left\{ {\frac{{\partial ^2 B\left[ {x;k\left( x \right)}
\right]}}{{\partial x\partial k}}B\left[ {x;k\left( x \right)} \right]
- \frac{{\partial B\left[ {x;k\left( x \right)} \right]}}{{\partial
k}}\frac{{\partial B\left[ {x;k\left( x \right)} \right]}}{{\partial
x}}} \right\} \\
u_{21} \left( x \right) = \frac{{k'\left( x \right)}}{{W\left( x
\right)}}\left\{ {\frac{{\partial A\left[ {x;k\left( x \right)}
\right]}}{{\partial k}}\frac{{\partial A\left[ {x;k\left( x \right)}
\right]}}{{\partial x}} - \frac{{\partial ^2 A\left[ {x;k\left( x
\right)} \right]}}{{\partial x\partial k}}A\left[ {x;k\left( x \right)}
\right]} \right\} \\
u_{22} \left( x \right) = \frac{{k'\left( x \right)}}{{W\left( x
\right)}}\left\{ {\frac{{\partial A\left[ {x;k\left( x \right)}
\right]}}{{\partial x}}\frac{{\partial B\left[ {x;k\left( x \right)}
\right]}}{{\partial k}} - A\left[ {x;k\left( x \right)}
\right]\frac{{\partial ^2 B\left[ {x;k\left( x \right)}
\right]}}{{\partial x\partial k}}} \right\} \\
\end{array}
\label{eq19}
\end{equation}

\noindent
where

\begin{equation}
W\left( x \right) = A\left[ {x;k\left( x \right)}
\right]\frac{{\partial B\left[ {x;k\left( x \right)}
\right]}}{{\partial x}} - \frac{{\partial A\left[ {x;k\left( x \right)}
\right]}}{{\partial x}}B\left[ {x;k\left( x \right)} \right]
\label{eq20}
\end{equation}

A general solution to (\ref{eq2})
is given by (\ref{eq5}) with

\begin{equation}
\left\{ {C'\left( x \right)} \right\} = \left[ {U\left( x \right)}
\right]\left\{ {C\left( x \right)} \right\}
\label{eq21}
\end{equation}

\noindent
in which the elements of the Kernel matrix $\left[ {U\left(
x \right)} \right]$ are given in (\ref{eq18}). Interestingly,
(\ref{eq21}) allows an analytical solution through perturbation theory as \cite{ref2,ref3}

\begin{equation}
\begin{array}{l}
\left\{ {C\left( {x_2 } \right)} \right\} = \left\{ {C\left( {x_1 }
\right)} \right\} + \int\limits_{x_1 }^{x_2 } {\left[ {U\left( {y_0 }
\right)} \right]\left\{ {C\left( {y_0 } \right)} \right\}dy_0 }  \\
\quad \quad \;\; + \int\limits_{x_1 }^{x_2 } {\int\limits_{x_1 }^{y_1
} {\left[ {U\left( {y_1 } \right)} \right]\left[ {U\left( {y_0 }
\right)} \right]\left\{ {C\left( {y_0 } \right)} \right\}dy_0 dy_1 } }
\\
\quad \quad \;\; + \int\limits_{x_1 }^{x_2 } {\int\limits_{x_1 }^{y_2
} {\int\limits_{x_1 }^{y_1 } {\left[ {U\left( {y_2 } \right)}
\right]\left[ {U\left( {y_1 } \right)} \right]\left[ {U\left( {y_0 }
\right)} \right]\left\{ {C\left( {y_0 } \right)} \right\}dy_0 dy_1 dy_2
} } }  +  \cdots  \\
\end{array}
\label{eq22}
\end{equation}

Often (\ref{eq22}) is written symbolically as

\begin{equation}
\begin{array}{l}
\left\{ {C\left( {x_2 } \right)} \right\} = \mathbb{T}\exp \left\{
{\int\limits_{x_1 }^{x_2 } {\left[ {U\left( x \right)} \right]dx} }
\right\}\left\{ {C\left( {x_1 } \right)} \right\} \\
\quad \quad \;\; = \left\{ {C\left( {x_1 } \right)} \right\} +
\int\limits_{x_1 }^{x_2 } {\left[ {U\left( {y_0 } \right)}
\right]\left\{ {C\left( {y_0 } \right)} \right\}dy_0 }  \\
\quad \quad \quad \;\; + \frac{1}{{2!}}\int\limits_{x_1 }^{x_2 }
{\int\limits_{x_1 }^{x_2 } {\mathbb{T}\left[ {U\left( {y_1 } \right)}
\right]\left[ {U\left( {y_0 } \right)} \right]\left\{ {C\left( {y_0 }
\right)} \right\}dy_0 dy_1 } }  \\
\quad \quad \quad \;\; + \frac{1}{{3!}}\int\limits_{x_1 }^{x_2 }
{\int\limits_{x_1 }^{x_2 } {\int\limits_{x_1 }^{x_2 } {\mathbb{T}\left[
{U\left( {y_2 } \right)} \right]\left[ {U\left( {y_1 } \right)}
\right]\left[ {U\left( {y_0 } \right)} \right]\left\{ {C\left( {y_0 }
\right)} \right\}dy_0 dy_1 dy_2 } } }  +  \cdots  \\
\quad \quad \;\; = \left[ {Q^{x_1  \to x_2 } } \right]\left\{
{C\left( {x_1 } \right)} \right\} \\
\end{array}
\label{eq23}
\end{equation}

\noindent
where $\left[ {Q^{x_1  \to x_2 } } \right]$ is the
transfer matrix from $x_1 $ to $x_2
$ and $\exp \left(  \cdot  \right)$ being
the matrix exponentiation

\begin{equation}
\exp \left[ M \right] = \left[ I \right] + \sum\limits_{n = 1}^\infty
{\frac{1}{{n!}}} \left[ M \right]^n
\label{eq24}
\end{equation}

Furthermore, $\mathbb{T}$ is the Dyson's ordering
operator defined as

\begin{equation}
\mathbb{T}\left[ {U\left( a \right)} \right]\left[ {U\left( b \right)}
\right] = \left\{ {\begin{array}{*{20}c}
{\left[ {U\left( a \right)} \right]\left[ {U\left( b \right)}
\right],} & {a > b}  \\
{\left[ {U\left( b \right)} \right]\left[ {U\left( a \right)}
\right],} & {a < b}  \\
\end{array}} \right.
\label{eq25}
\end{equation}

There are few known sufficient conditions \cite{ref3,ref13}, which rarely happen and
under which $\mathbb{T}$ may be dropped exactly to reach

\begin{equation}
\left\{ {C\left( {x_2 } \right)} \right\} = \exp \left\{
{\int\limits_{x_1 }^{x_2 } {\left[ {U\left( x \right)} \right]dx} }
\right\}\left\{ {C\left( {x_1 } \right)} \right\}
\label{eq26}
\end{equation}

The most important sufficient condition includes the case when the
Kernel matrix commutes with itself as $\left[ {U\left( a
\right)} \right]\left[ {U\left( b \right)} \right] = \left[ {U\left( b
\right)} \right]\left[ {U\left( a \right)} \right]$ for all
given $a$ and $b$. This is better known as the Lappo-Danilevskii \cite{ref13a} 
criterion, and is also generalized by Fedorov \cite{ref13b}. Evidently,
this condition applies to all constant as well as diagonal Kernel
matrices. It is therefore a challenge to find construct the proper
extended eigenvalue equation in such a way that the corresponding
Kernel matrix meets any of these sufficient criteria. If possible, then
exact and explicit closed form solutions to (\ref{eq2})
are found analytically by using (\ref{eq26}) instead of (\ref{eq23}).

While in general, (\ref{eq26}) is merely an approximation to the exact
solution (\ref{eq23}), nevertheless, in any case the determinant and trace of
$\left[ {Q^{x_1  \to x_2 } } \right]$ would be
preserved exactly, meaning that the eigenvalues of the transfer matrix
$\left[ {Q^{x_1  \to x_2 } } \right]$ would not be
affected at least. This allows us to formulate a very convenient
approximate numerical solution to the extended problem (\ref{eq2}).

\subsection{Properties of Differential Transfer Matrix}

Properties of the transfer matrix $\left[ {Q^{x_1  \to x_2
} } \right]$ as defined in (\ref{eq23}) are very much similar to
those of the jump transfer matrix
(\ref{eq14}), given by

\begin{equation}
\begin{array}{l}
\left[ {Q^{x_1  \to x_1 } } \right] = \left[ I \right] \\
\left| {Q^{x_1  \to x_2 } } \right| = \frac{{W_1 }}{{W_2 }} \\
\left[ {Q^{x_1  \to x_2 } } \right] = \left[ {Q^{x_3  \to x_2 } }
\right]\left[ {Q^{x_1  \to x_3 } } \right] \\
\left[ {Q^{x_1  \to x_2 } } \right]^{ - 1}  = \left[ {Q^{x_2  \to x_1
} } \right] \\
\end{array}
\label{eq27}
\end{equation}

Properties of the transfer matrix in which $W_i  = W\left[
{x_i ;k_i } \right] = W\left[ {x_i ;k\left( {x_i } \right)} \right],i =
1,2$. The first property, the identity property readily
follows by definition in (\ref{eq23}). The determinant property can be
observed by noting that Dyson's operator has no effect on the determinant,
and thus

\begin{equation}
\begin{array}{l}
\left| {Q^{x_1  \to x_2 } } \right| = \left| {\mathbb{T}\exp \left\{
{\int\limits_{x_1 }^{x_2 } {\left[ {U\left( x \right)} \right]dx} }
\right\}} \right| = \left| {\exp \left\{ {\int\limits_{x_1 }^{x_2 }
{\left[ {U\left( x \right)} \right]dx} } \right\}} \right| \\
\quad \quad \,\,
= \exp\left( {{\mathop{\rm tr}\nolimits} \left\{ {\int\limits_{x_1 }^{x_2 }
{\left[ {U\left( x \right)} \right]dx} } \right\}} \right)
 = \exp \left( {\int\limits_{x_1 }^{x_2 } {\left[
{u_{11} \left( x \right) + u_{22} \left( x \right)} \right]dx} }
\right) \\
\end{array}
\label{eq28}
\end{equation}

Now from (\ref{eq18}) we get

\begin{equation}
\begin{array}{l}
\int\limits_{x_1 }^{x_2 } {\left[ {u_{11} \left( x \right) + u_{22}
\left( x \right)} \right]dx}  = \int\limits_{x_1 }^{x_2 }
{\frac{{A_{xk} B - AB_{xk} }}{{AB_x  - BA_x }}k'dx}  =  -
\int\limits_{x_1 }^{x_2 } {\frac{\partial }{{\partial k}}\ln
W\frac{{\partial k}}{{\partial x}}dx}  \\
\quad  =  - \int\limits_{k_1 }^{k_2 } {\frac{\partial }{{\partial
k}}\ln Wdk}  = \ln k_1  - \ln k_2  \\
\end{array}
\label{eq29}
\end{equation}

Inserting this result in (\ref{eq28})
gives the determinant property in (\ref{eq28}). Combination
and inversion properties also follow the definition
(\ref{eq23}).

\subsection{Fundamental Theorem}

In this section, we rigorously show that the differential transfer
matrix method leads to an exact solution of the differential equation
(\ref{eq2}). To show this, we start by proving the so-called \textit{Derivative
Lemma}.

\begin{lem}
The total derivative of function (\ref{eq5}) is given by

\begin{equation}
\frac{{d^n }}{{dx^n }}f\left[ {x;k\left( x \right)} \right] = a\left(
x \right)\frac{{\partial ^n }}{{\partial x^n }}A\left[ {x;k\left( x
\right)} \right] + b\left( x \right)\frac{{\partial ^n }}{{\partial x^n
}}B\left[ {x;k\left( x \right)} \right],\,\,0 \le n \le 2
\label{eq30}
\end{equation}

\end{lem}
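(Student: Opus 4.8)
The plan is to proceed by induction on $n$, with the cases $n=0,1,2$ treated individually since the claim is only asserted on that range. The base case $n=0$ is simply the definition (\ref{eq5}) of $f[x;k(x)]$, so nothing is to be done there. The real content is in the step from $n=0$ to $n=1$, and then from $n=1$ to $n=2$; the key mechanism in each case is that the terms generated by differentiating the \emph{envelope functions} $a(x)$ and $b(x)$ must cancel, leaving only the terms in which the derivative falls on the basis functions $A$ and $B$.

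First I would differentiate (\ref{eq5}) directly using the product and chain rules, being careful that $A[x;k(x)]$ depends on $x$ both explicitly and through $k(x)$, so that $\frac{d}{dx}A[x;k(x)] = A_x + k'(x)A_k$ (and likewise for $B$). This gives
\begin{equation}
\frac{d}{dx}f = \bigl(a' A + b' B\bigr) + \bigl(a A_x + b B_x\bigr) + k'(x)\bigl(a A_k + b B_k\bigr).
\label{eqprop1}
\end{equation}
The claim for $n=1$ is therefore equivalent to the vanishing of the bracketed combination $a' A + b' B + k'(aA_k + bB_k)$. The crucial observation is that this is exactly $\{A\ B\}\,\{C'\} - \{A\ B\}\,[U]\{C\}$ up to the $k'$ term — more precisely, one reads off from the explicit form (\ref{eq18}) of the Kernel matrix that $A\,u_{11} + B\,u_{21} = 0$ and $A\,u_{12} + B\,u_{22} = 0$ identically (both numerators collapse once the common factor is extracted), so that the first row of $\{A\ B\}[U]$ is zero. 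Hence $a'A + b'B = \{A\ B\}\{C'\} = \{A\ B\}[U]\{C\} = 0$ is \emph{not} quite what we need; rather, I expect the intended argument is that $\{C\}$ satisfies (\ref{eq21}) \emph{by construction} precisely so that the unwanted pieces in (\ref{eqprop1}) — namely $a'A+b'B+k'(aA_k+bB_k)$ — telescope to zero. So the step to carry out is: substitute $a' = u_{11}a + u_{12}b$, $b' = u_{21}a + u_{22}b$ from (\ref{eq21}) into that bracket and verify algebraically, using (\ref{eq18}), that it equals $-k'(aA_k+bB_k)$, which kills the chain-rule term and yields (\ref{eq30}) for $n=1$.

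For $n=2$, I would differentiate the $n=1$ result $f' = aA_x + bB_x$ in the same way, obtaining $f'' = a'A_x + b'B_x + a(A_{xx} + k'A_{xk}) + b(B_{xx} + k'B_{xk})$, and then again use (\ref{eq21}) to replace $a',b'$ and check that the spurious terms $a'A_x + b'B_x + k'(aA_{xk}+bB_{xk})$ cancel. This is where I expect the verification to hinge on the \emph{second} row of $\{A_x\ B_x\}[U]$ together with the $k'$-contributions matching the second column structure in (\ref{eq18}); the identities $A_x u_{11} + B_x u_{21} = -k'A_{xk}$ and $A_x u_{12} + B_x u_{22} = -k'B_{xk}$ (again obtained by direct substitution and the Wronskian definition $W = AB_x - A_xB$) are what make it work. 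The induction naturally \emph{stops} at $n=2$ because for $n\ge 3$ one would need to differentiate $A_{xx}$, which via the original ODE (\ref{eq1}) re-expresses $A_{xx}$ in terms of $A$ and $A_x$ with $k$-dependent coefficients, reintroducing $k'$ terms that no longer cancel — so the clean form (\ref{eq30}) genuinely fails beyond second order, consistent with the range stated.

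The main obstacle is purely computational: it is the algebraic verification, using the explicit entries (\ref{eq18}) and the Wronskian identity, that the two cancellation identities hold — i.e. that $\{A\ B\}[U] = \{0\ 0\}$ and $\{A_x\ B_x\}[U] = -k'\{A_{xk}\ B_{xk}\}$. Once these two row identities are in hand, both inductive steps are immediate, and no genuine conceptual difficulty remains; the lemma is then a direct consequence of having defined $[U]$ through (\ref{eq17})--(\ref{eq18}) in exactly the way that forces these cancellations.
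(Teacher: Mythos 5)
Your proposal is correct and follows essentially the same route as the paper: differentiate (\ref{eq5}) by the chain rule, substitute $a'$ and $b'$ from (\ref{eq21}) and (\ref{eq18}), and check that the algebra cancels the envelope and $k'$-chain-rule terms; the two row identities you isolate, $\left( {A\;\;B} \right)\left[ U \right] = - k'\left( {A_k \;\;B_k } \right)$ and $\left( {A_x \;\;B_x } \right)\left[ U \right] = - k'\left( {A_{xk} \;\;B_{xk} } \right)$, are precisely what the paper's ``some minor algebra'' amounts to, and your remark on why the range stops at $n = 2$ is a sound addition. One slip to fix: your closing summary restates the first identity as $\left( {A\;\;B} \right)\left[ U \right] = \left( {0\;\;0} \right)$, which is false and contradicts the correct version you give in the body --- the numerators collapse to $- k' A_k W/W$ and $- k' B_k W/W$, not to zero, and it is exactly this nonzero value that cancels the $k'\left( {aA_k + bB_k } \right)$ term.
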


\begin{proof}
Proof follows by direct substitution. The
case of $n = 0$ is trivial, and we first prove the
validity of (\ref{eq30}) for $n = 1$. Direct differentiation of
(\ref{eq5})
by chain rule gives

\begin{equation}
\frac{d}{{dx}}f = a'A + b'B + a\left( {A_x  + k'A_k } \right) +
b\left( {B_x  + k'B_k } \right)
\label{eq31}
\end{equation}

\noindent
where $f = aA + bB$. But we already
have obtained the derivatives $a'$ and
$b'$ from (\ref{eq21}) and (\ref{eq18}) as

\begin{equation}
\begin{array}{l}
a' = u_{11} a + u_{12} b = \frac{{k'}}{W}\left[ {\left( {A_{xk} B -
A_k B_x } \right)a + \left( {B_{xk} B - B_k B_x } \right)b} \right] \\
b' = u_{21} a + u_{22} b = \frac{{k'}}{W}\left[ {\left( {A_k A_x  -
A_{xk} A} \right)a + \left( {B_k A_x  - B_{xk} A} \right)b} \right] \\
\end{array}
\label{eq32}
\end{equation}

\noindent
in which the elements $u_{ij} ,i,j =
1,2$ of the Kernel matrix are given in (\ref{eq18}). After some minor algebra we get

\begin{equation}
\frac{d}{{dx}}f = aA_x  + bB_x
\label{eq33}
\end{equation}

\noindent
To show the correctness of (\ref{eq30}) for $n = 2$, we take the derivative again with
respect to $x$ from both sides of (\ref{eq30}), thus giving

\begin{equation}
\begin{array}{l}
\frac{{d^2 }}{{dx^2 }}f = \frac{d}{{dx}}\left( {aA_x  + bB_x }
\right) \\
\quad  = a'A_x  + b'B_x  + a\frac{d}{{dx}}A_x  + b\frac{d}{{dx}}B_x
\\
\quad  = \left( {u_{11} a + u_{12} b} \right)A_x  + \left( {u_{21} a
+ u_{22} b} \right)B_x  \\
\quad  + a\left( {A_{xx}  + k'A_{xk} } \right) + b\left( {B_{xx}  +
k'B_{xk} } \right) \\
\end{array}
\label{eq34}
\end{equation}

\noindent
Further substitution of (\ref{eq32})
in (\ref{eq34}) gives

\begin{equation}
\begin{array}{l}
\frac{{d^2 }}{{dx^2 }}f\; = \frac{{k'}}{{AB_x  - BA_x }}\left[
{\left( {A_{xk} B - A_k B_x } \right)a + \left( {B_{xk} B - B_k B_x }
\right)b} \right]A_x  \\
\quad \quad \;\; + \frac{{k'}}{{AB_x  - BA_x }}\left[ {\left( {A_k
A_x  - A_{xk} A} \right)a + \left( {B_k A_x  - B_{xk} A} \right)b}
\right]B_x  \\
\quad \quad \;\; + a\left( {A_{xx}  + k'A_{xk} } \right) + b\left(
{B_{xx}  + k'B_{xk} } \right) \\
\end{array}
\label{eq35}
\end{equation}

\noindent
Here, we have used the definition of the Wronskian
$W = AB_x  - BA_x $ from (\ref{eq4}). Eventually, after some
algebra we arrive at the final result

\begin{equation}
\frac{{d^2 }}{{dx^2 }}f\; = aA_{xx}  + bB_{xx}
\label{eq36}
\end{equation}

\noindent
This completes the proof

\end{proof}

We are now in a position to present the \textit{Fundamental Theorem of
Differential Transfer Matrix Method }as follows.

\begin{thm}
The solution to (\ref{eq2}) having the form (\ref{eq5})
with (\ref{eq21}) is exact.
\end{thm}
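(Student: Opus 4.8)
The plan is to verify directly that the function $f$ given by (\ref{eq5}), with the envelope vector $\{C(x)\}=\{a(x),b(x)\}^{\mathrm{T}}$ governed by (\ref{eq21}) together with the Kernel (\ref{eq18}), is annihilated by the extended operator $\mathbb{L}_{k(x)}$, so that (\ref{eq2}) holds with no residual error. The whole argument rests on the \emph{Derivative Lemma} just established, which is precisely the statement that the chain-rule ``cross terms'' generated by the $x$-dependence of $k$ and of the envelopes cancel exactly.

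First I would put the original second-order operator in normalized form. Since $\mathbb{H}_k$ is linear, of second order, and differentiates only in $x$, upon dividing by its (nonvanishing) leading coefficient it reads
\begin{equation}
\mathbb{H}_k = \frac{\partial^2}{\partial x^2} + p(x;k)\,\frac{\partial}{\partial x} + q(x;k),
\label{eqpf1}
\end{equation}
and, by the construction of the bases around (\ref{eq3}), we have $\mathbb{H}_k A(x;k)=\mathbb{H}_k B(x;k)=0$ identically in $x$ and in $k$. By the prescription following (\ref{eq2}), the extended operator is then $\mathbb{L}_{k(x)} = d^2/dx^2 + p(x;k(x))\,d/dx + q(x;k(x))$.

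Next I would substitute (\ref{eq5}) into $\mathbb{L}_{k(x)}f$ and invoke the Derivative Lemma. Rather than expanding by the chain rule as in (\ref{eq31}) --- which throws up terms in $a'$, $b'$, $k'A_k$, $k'B_k$, and so on --- the Lemma collapses the first and second total derivatives to $df/dx = aA_x + bB_x$ and $d^2f/dx^2 = aA_{xx} + bB_{xx}$, in which $A_x,B_x,A_{xx},B_{xx}$ stand for the partial $x$-derivatives of the bases evaluated at $k=k(x)$. Hence
\begin{equation}
\mathbb{L}_{k(x)}f = a(x)\bigl[A_{xx}+p\,A_x+q\,A\bigr] + b(x)\bigl[B_{xx}+p\,B_x+q\,B\bigr],
\label{eqpf2}
\end{equation}
with $p=p(x;k(x))$ and $q=q(x;k(x))$. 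Each bracket is exactly $\mathbb{H}_k A$, respectively $\mathbb{H}_k B$, evaluated at the frozen value $k=k(x)$, and both vanish identically by (\ref{eqpf1}). Therefore $\mathbb{L}_{k(x)}f\equiv 0$, i.e.\ $f$ solves (\ref{eq2}) exactly, as claimed.

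The only genuine obstacle is the Derivative Lemma itself, and it has already been dispatched; what deserves emphasis is \emph{why} it holds. The Kernel matrix (\ref{eq18}) is tailored so that in the chain-rule expansion the $a'$ and $b'$ contributions cancel the $k'A_k$ and $k'B_k$ contributions --- and likewise one derivative further up --- leaving only the bare partial $x$-derivatives; any other choice of envelope law would leave uncancelled residues and destroy (\ref{eqpf2}). One should only note in passing that the Kernel (\ref{eq18}) is well defined along the whole interval: its denominator $W(x)$ in (\ref{eq20}) is the Wronskian of the bases evaluated at $k=k(x)$, which is nonzero because the standing hypothesis (\ref{eq4}) asserts $W(x;k)\neq 0$ for every admissible $x$ and $k$.
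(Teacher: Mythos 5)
Your argument is correct and is essentially the paper's own proof: both substitute (\ref{eq5}) into (\ref{eq2}) and use the Derivative Lemma to reduce $\mathbb{L}_{k(x)}f$ to $a\,\mathbb{H}_k A + b\,\mathbb{H}_k B = 0$ at the frozen value $k=k(x)$. You merely make explicit (via the normalized form of the operator) the step the paper leaves implicit, namely that the Lemma is what licenses replacing total derivatives of $f$ by $aA_{xx}+bB_{xx}$ and $aA_x+bB_x$ before regrouping.
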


\begin{proof}
To show the validity of the statement, we
start by plugging in (\ref{eq5}) directly into (\ref{eq2}).

\begin{equation}
\begin{array}{l}
\mathbb{L}_k f\left[ {x;k} \right] = \mathbb{L}_k \left\{ {aA\left[ {x;k}
\right] + bB\left[ {x;k} \right]} \right\} \\
\quad \quad \quad \quad \; = a\mathbb{H}_k A\left[ {x;k} \right] +
b\mathbb{H}_k B\left[ {x;k} \right] \\
\end{array}
\label{eq37}
\end{equation}

\noindent where $\mathbb{H}_k $ and
$\mathbb{L}_k $ are respectively defined in
(\ref{eq1}) and (\ref{eq2}). Here, the explicit
dependence of the operator $\mathbb{L}_k$ and functions $a$,
$b$, and $k$ on $x$ is not shown. But by assumption we have
$\mathbb{H}_k A\left[ {x;k} \right] = \mathbb{H}_k B\left[ {x;k}
\right] = 0$ and hence the assertion.
\end{proof}

All remains is to force the initial or boundary conditions, and we
here mention the treatment of both types of conditions.

\subsection{Initial Conditions}

Without loss of generality, we may assume that initial conditions are
known at some point like $x = c$. Suppose that
$f\left( c \right)$ and $f'\left( c
\right)$ are known. Then by derivative lemma (\ref{eq30})
we have

\begin{equation}
f'\left( x \right) = A_x \left[ {x;k\left( x \right)} \right]a\left( x
\right) + B_x \left[ {x;k\left( x \right)} \right]b\left( x \right)
\label{eq38}
\end{equation}

\noindent
Therefore we get

\begin{equation}
\left\{ {\begin{array}{*{20}c}
{f\left( c \right)}  \\
{f'\left( c \right)}  \\
\end{array}} \right\} = \left[ {\begin{array}{*{20}c}
{A\left[ {c;k\left( c \right)} \right]} & {B\left[ {c;k\left( c
\right)} \right]}  \\
{A_x \left[ {c;k\left( c \right)} \right]} & {B_x \left[ {c;k\left(
c \right)} \right]}  \\
\end{array}} \right]\left\{ {\begin{array}{*{20}c}
{a\left( c \right)}  \\
{b\left( c \right)}  \\
\end{array}} \right\}
\label{eq39}
\end{equation}

\noindent
Thus we may obtain the initial vector $\left\{ {C\left( c
\right)} \right\}$ as

\begin{equation}
\left\{ {C\left( c \right)} \right\} = \left[ {\begin{array}{*{20}c}
{A\left[ {c;k\left( c \right)} \right]} & {B\left[ {c;k\left( c
\right)} \right]}  \\
{A_x \left[ {c;k\left( c \right)} \right]} & {B_x \left[ {c;k\left(
c \right)} \right]}  \\
\end{array}} \right]^{ - 1} \left\{ {\begin{array}{*{20}c}
{f\left( c \right)}  \\
{f'\left( c \right)}  \\
\end{array}} \right\}
\label{eq40}
\end{equation}

\noindent
with the solution

\begin{equation}
\left\{ {C\left( c \right)} \right\} = \frac{1}{{W\left[ {c;k\left( c
\right)} \right]}}\left[ {\begin{array}{*{20}c}
{B_x \left[ {c;k\left( c \right)} \right]} & { - B\left[ {c;k\left(
c \right)} \right]}  \\
{ - A_x \left[ {c;k\left( c \right)} \right]} & {A\left[ {c;k\left(
c \right)} \right]}  \\
\end{array}} \right]\left\{ {\begin{array}{*{20}c}
{f\left( c \right)}  \\
{f'\left( c \right)}  \\
\end{array}} \right\}
\label{eq41}
\end{equation}

Now by (\ref{eq23}) we have the full solution to the envelope vector as

\begin{equation}
\begin{array}{l}
\left\{ {C\left( x \right)} \right\} = \mathbb{T}\exp \left\{
{\int\limits_c^x {\left[ {U\left( y \right)} \right]dy} }
\right\}\left\{ {C\left( c \right)} \right\}
\\
\quad \quad \,\,
\approx \exp \left\{
{\int\limits_c^x {\left[ {U\left( y \right)} \right]dy} }
\right\}\left\{ {C\left( c \right)} \right\}
\end{array}
\label{eq42}
\end{equation}

\noindent
This allows us to obtain the final solution to the initial value
problem (\ref{eq2}) as

\begin{equation}
f\left( x \right) = \left[ {\begin{array}{*{20}c}
{A\left[ {x;k\left( x \right)} \right]} & {B\left[ {x;k\left( x
\right)} \right]}  \\
\end{array}} \right]\left\{ {C\left( x \right)} \right\}
\label{eq43}
\end{equation}

\subsection{Boundary Conditions}

Without loss of generality, we may assume that boundary conditions are
known at two points like $x = c_1 ,c_2 $. Suppose
that $f\left( {c_1 } \right)$ and
$f\left( {c_2 } \right)$ are known. Then by
(\ref{eq5}) we have

\begin{equation}
\begin{array}{l}
f\left( {c_1 } \right) = \left[ {\begin{array}{*{20}c}
{A\left[ {c_1 ;k\left( {c_1 } \right)} \right]} & {B\left[ {c_1
;k\left( {c_1 } \right)} \right]}  \\
\end{array}} \right]\left\{ {C\left( {c_1 } \right)} \right\} \\
f\left( {c_2 } \right) = \left[ {\begin{array}{*{20}c}
{A\left[ {c_2 ;k\left( {c_2 } \right)} \right]} & {B\left[ {c_2
;k\left( {c_2 } \right)} \right]}  \\
\end{array}} \right]\left\{ {C\left( {c_2 } \right)} \right\} \\
\end{array}
\label{eq44}
\end{equation}

But from (\ref{eq23}) we have

\begin{equation}
\left\{ {C\left( {c_2 } \right)} \right\} = \left[ {Q^{c_1  \to c_2 }
} \right]\left\{ {C\left( {c_1 } \right)} \right\}
\label{eq45}
\end{equation}

Thus by combining (\ref{eq44})
and (\ref{eq45}) we may obtain the system of equations

\begin{equation}
\begin{array}{l}
\left\{ {\begin{array}{*{20}c}
{f\left( {c_1 } \right)}  \\
{f\left( {c_2 } \right)}  \\
\end{array}} \right\} =
\\
\quad
\left[ {\begin{array}{*{20}c}
{A\left[ {c_1 ;k\left( {c_1 } \right)} \right]} & {B\left[ {c_1
;k\left( {c_1 } \right)} \right]}  \\
{q_{11} A\left[ {c_2 ;k\left( {c_2 } \right)} \right] + q_{21}
B\left[ {c_2 ;k\left( {c_2 } \right)} \right]} & {q_{12} A\left[ {c_2
;k\left( {c_2 } \right)} \right] + q_{22} B\left[ {c_2 ;k\left( {c_2 }
\right)} \right]}  \\
\end{array}} \right] \times \\ \quad
\left\{ {C\left( {c_1 } \right)} \right\}
\end{array}
\label{eq46}
\end{equation}

\noindent
with the solution

\begin{equation}
\begin{array}{l}
\left\{ {C\left( {c_1 } \right)} \right\} = \\
\quad
\left[
{\begin{array}{*{20}c}
{A\left[ {c_1 ;k\left( {c_1 } \right)} \right]} & {B\left[ {c_1
;k\left( {c_1 } \right)} \right]}  \\
{q_{11} A\left[ {c_2 ;k\left( {c_2 } \right)} \right] + q_{21}
B\left[ {c_2 ;k\left( {c_2 } \right)} \right]} & {q_{12} A\left[ {c_2
;k\left( {c_2 } \right)} \right] + q_{22} B\left[ {c_2 ;k\left( {c_2 }
\right)} \right]}  \\
\end{array}} \right]^{ - 1} \times \\
\quad \left\{ {\begin{array}{*{20}c}
{f\left( {c_1 } \right)}  \\
{f\left( {c_2 } \right)}  \\
\end{array}} \right\} \\
\left\{ {C\left( {c_2 } \right)} \right\} = \left[ {Q^{c_1  \to c_2 }
} \right]\left\{ {C\left( {c_1 } \right)} \right\} \\
\end{array}
\label{eq47}
\end{equation}

The rest is the same and similar to (\ref{eq42})

\begin{equation}
\begin{array}{l}
\left\{ {C\left( x \right)} \right\} = \mathbb{T}\exp \left\{
{\int\limits_{c_1 }^x {\left[ {U\left( y \right)} \right]dy} }
\right\}\left\{ {C\left( {c_1 } \right)} \right\} = \mathbb{T}\exp
\left\{ { - \int\limits_x^{c_2 } {\left[ {U\left( y \right)} \right]dy}
} \right\}\left\{ {C\left( {c_2 } \right)} \right\} \\
\quad \quad \; \approx \exp \left\{ {\int\limits_{c_1 }^x {\left[
{U\left( y \right)} \right]dy} } \right\}\left\{ {C\left( {c_1 }
\right)} \right\} \approx \exp \left\{ { - \int\limits_x^{c_2 } {\left[
{U\left( y \right)} \right]dy} } \right\}\left\{ {C\left( {c_2 }
\right)} \right\} \\
\end{array}
\label{eq48}
\end{equation}

\section{Examples}

In this section, we present some application examples describing the
details of the method.

\subsection{Wave Equation}

Numerous physical problems are described via the simple second-order
equation

\begin{equation}
\psi ''\left( x \right) + k^2 \left( x \right)\psi \left( x \right) =
0
\label{eq49}
\end{equation}

\noindent
The equation (\ref{eq49}) is known to have no explicit solution for arbitrary eigenvalue
function $k\left( x \right)$, which in the
literature is usually referred to as the wavenumber function. The only
existing analytical solution to (\ref{eq49})
is the very well-known WKB approximation. Here, the corresponding
operators read

\begin{equation}
\mathbb{H}_k  = \frac{{\partial ^2 }}{{\partial x^2 }} + k^2 \quad \quad
\mathbb{L}_{k\left( x \right)}  = \frac{{d^2 }}{{dx^2 }} + k^2 \left( x
\right)
\label{eq50}
\end{equation}

\noindent
In case of constant wavenumber, any solution of (\ref{eq49})
is given by linear combinations of exponential functions $\exp
\left( { \pm ikx} \right)$, and therefore the eigenfunctions
are readily found to be

\begin{equation}
A\left[ {x;k\left( x \right)} \right] = \exp \left[ { - ixk\left( x
\right)} \right]\quad B\left[ {x;k\left( x \right)} \right] = \exp
\left[ { + ixk\left( x \right)} \right]
\label{eq51}
\end{equation}

\noindent
with the Wronskian $W\left[ {x;k\left( x \right)} \right] =
2ik\left( x \right)$. To find a solution to (\ref{eq49})
it is therefore sufficient to find the Kernel matrix
$\left[ {U\left( x \right)} \right]$, which is

\begin{equation}
\left[ {U\left( x \right)} \right] = \frac{{k'\left( x
\right)}}{{2k\left( x \right)}}\left[ {\begin{array}{*{20}c}
{ - 1 + 2ixk\left( x \right)} & {\exp \left[ { + 2ixk\left( x
\right)} \right]}  \\
{\exp \left[ { - 2ixk\left( x \right)} \right]} & { - 1 -
2ixk\left( x \right)}  \\
\end{array}} \right]
\label{eq52}
\end{equation}

Numerical and analytical solutions of (\ref{eq49})
using the Kernel matrix (\ref{eq52}) has been extensively
studied in the literature through past years.

\subsection{Airy's Equation}

One of the known problems with the above differential transfer matrix
solution of (\ref{eq49}) is occurrence of singularities at which $k^2 \left( x
\right)$ changes sign and the Wronskian vanishes, better known
as returning points. The singularity is clear in (\ref{eq52})
as the denominator. Even the approximate WKB solution fails near
returning points. It is a common practice however to expand the
solution near the returning points via Airy functions. Without loss of
generality we may assume that the singularity is located at
$x = 0$. In this case, we present a slightly
modified form of (\ref{eq49}) as

\begin{equation}
\psi ''\left( x \right) - k^3 \left( x \right)x\psi \left( x \right) =
0
\label{eq53}
\end{equation}

\noindent
with the Airy wavenumber satisfying $k\left( 0 \right) \ne0$.
Hence,

\begin{equation}
\mathbb{H}_k  = \frac{{\partial ^2 }}{{\partial x^2 }} - k^3 x\quad
\quad \mathbb{L}_{k\left( x \right)}  = \frac{{d^2 }}{{dx^2 }} - k^3
\left( x \right)x
\label{eq54}
\end{equation}

In case of constant wavenumber, any solution of (\ref{eq53})
is given by linear combinations of Airy functions
${\mathop{\rm Ai}\nolimits} \left( {kx} \right)$
and ${\mathop{\rm Bi}\nolimits} \left( {kx}
\right)$, and thus the eigenfunctions take the form

\begin{equation}
A\left[ {x;k\left( x \right)} \right] = {\mathop{\rm Ai}\nolimits}
\left[ {xk\left( x \right)} \right]\quad B\left[ {x;k\left( x \right)}
\right] = {\mathop{\rm Bi}\nolimits} \left[ {xk\left( x \right)}
\right]
\label{eq55}
\end{equation}

\noindent
having the constant Wronskian $W\left[ {x;k\left( x
\right)} \right] = {1 \mathord{\left/
{\vphantom {1 \pi }} \right.
\kern-\nulldelimiterspace} \pi }$. This choice of basis,
after some simplification, leads to the following Kernel matrix
elements

\begin{equation}
\begin{array}{l}
u_{11}  =  + \pi k'\left\{ {k^2 x^2 {\mathop{\rm Ai}\nolimits} \left(
{kx} \right){\mathop{\rm Bi}\nolimits} \left( {kx} \right) +
{\mathop{\rm A}\nolimits} i'\left( {kx} \right)\left[ {{\mathop{\rm
Bi}\nolimits} \left( {kx} \right) - kx{\mathop{\rm B}\nolimits}
i'\left( {kx} \right)} \right]} \right\} \\
u_{12}  = \pi k'\left\{ {{\mathop{\rm B}\nolimits} i'\left( {kx}
\right)\left[ {{\mathop{\rm Bi}\nolimits} \left( {kx} \right) -
xk{\mathop{\rm B}\nolimits} i'\left( {kx} \right)} \right] + k^2 x^2
{\mathop{\rm Bi}\nolimits} ^2 \left( {kx} \right)} \right\} \\
u_{21}  = \pi k'\left\{ {{\mathop{\rm A}\nolimits} i'\left( {kx}
\right)\left[ {xk{\mathop{\rm A}\nolimits} i'\left( {kx} \right) -
{\mathop{\rm Ai}\nolimits} \left( {kx} \right)} \right] - k^2 x^2
{\mathop{\rm Ai}\nolimits} ^2 \left( {kx} \right)} \right\} \\
u_{22}  =  - \pi k'\left\{ {k^2 x^2 {\mathop{\rm Ai}\nolimits} \left(
{kx} \right){\mathop{\rm Bi}\nolimits} \left( {kx} \right) +
{\mathop{\rm B}\nolimits} i'\left( {kx} \right)\left[ {{\mathop{\rm
Ai}\nolimits} \left( {kx} \right) - kx{\mathop{\rm A}\nolimits}
i'\left( {kx} \right)} \right]} \right\} \\
\end{array}
\label{eq56}
\end{equation}

\noindent
which is clearly non-singular at $x = 0$. Here,
dependence of the eigenvalue function $k\left( x
\right)$ on $x$ is not displayed for the
sake of convenience.

\subsection{Bessel's Equation}

In the domain of fiber optics having cylindrical symmetry and after
proper transformations, the radial component of the wave equation takes
the form

\begin{equation}
x^2 \psi ''\left( x \right) + x\psi '\left( x \right) + \left[ {k^2
\left( x \right)x^2  - \nu ^2 } \right]\psi \left( x \right) = 0
\label{eq57}
\end{equation}

\noindent
Hence

\begin{equation}
\mathbb{H}_k  = x^2 \frac{{\partial ^2 }}{{\partial x^2 }} +
x\frac{\partial }{{\partial x}} + \left( {k^2 x^2  - \nu ^2 }
\right)\quad \quad \mathbb{L}_{k\left( x \right)}  = x^2 \frac{{d^2
}}{{dx^2 }} + x\frac{d}{{dx}} + \left[ {k^2 \left( x \right)x^2  - \nu
^2 } \right]
\label{eq58}
\end{equation}

In case of constant wavenumber, any solution of (\ref{eq57})
is given by any linear combinations of Bessel and Neumann functions
$J_\nu  \left( {kx} \right)$ and $N_\nu
\left( {kx} \right)$, or Hankel functions $H_\nu
^{\left( 1 \right)} \left( {kx} \right)$ and $H_\nu
 ^{\left( 2 \right)} \left( {kx} \right)$. For the first pair
the eigenfunctions take the form

\begin{equation}
A\left[ {x;k\left( x \right)} \right] = J_\upsilon  \left[ {xk\left( x
\right)} \right]\quad B\left[ {x;k\left( x \right)} \right] =
N_\upsilon  \left[ {xk\left( x \right)} \right]
\label{eq59}
\end{equation}

\noindent
having the Wronskian $W\left[ {x;k\left( x \right)} \right]
= {2 \mathord{\left/
{\vphantom {2 {\pi x}}} \right.
\kern-\nulldelimiterspace} {\pi x}}$. The elements of the
corresponding Kernel matrix are

\begin{equation}
\begin{array}{l}
u_{11}  =  + \frac{{x\pi k'}}{2}\left\{ {\left[ {kx\left[ {J_{\nu  -
2} \left( {kx} \right) - 2J_\nu  \left( {kx} \right) + J_{\nu  + 2}
\left( {kx} \right)} \right] +  2\left[ {J_{\nu  - 1} \left( {kx}
\right) - J_{\nu  + 1} \left( {kx} \right)} \right]} \right]N_\nu
\left( {kx} \right)} \right. \\
\quad \left. {\quad \quad \quad \;\; - kx\left[ {J_{\nu  - 1} \left(
{kx} \right) - J_{\nu  + 1} \left( {kx} \right)} \right]\left[ {N_{\nu
- 1} \left( {kx} \right) - N_{\nu  + 1} \left( {kx} \right)} \right]}
\right\} \\
u_{12}  =  - \frac{{x\pi k'}}{2}\left\{ {\left[ {kx\left[ {N_{\nu  -
2} \left( {kx} \right) - 2N_\nu  \left( {kx} \right) + N_{\nu  + 2}
\left( {kx} \right)} \right] + 2\left[ {N_{\nu  - 1} \left( {kx}
\right) - N_{\nu  + 1} \left( {kx} \right)} \right]} \right]N_\nu
\left( {kx} \right)} \right. \\
\quad \left. {\quad \quad \quad \;\; + kx\left[ {N_{\nu  - 1} \left(
{kx} \right) - N_{\nu  + 1} \left( {kx} \right)} \right]^2 } \right\}
\\
u_{21}  =  + \frac{{x\pi k'}}{2}\left\{ {\left[ {kx\left[ {J_{\nu  -
2} \left( {kx} \right) - 2J_\nu  \left( {kx} \right) + J_{\nu  + 2}
\left( {kx} \right)} \right] + 2\left[ {J_{\nu  - 1} \left( {kx}
\right) - J_{\nu  + 1} \left( {kx} \right)} \right]} \right]J_\nu
\left( {kx} \right)} \right. \\
\quad \left. {\quad \quad \quad \;\; + kx\left[ {J_{\nu  - 1} \left(
{kx} \right) - J_{\nu  + 1} \left( {kx} \right)} \right]^2 } \right\}
\\
u_{22}  =  - \frac{{x\pi k'}}{2}\left\{ {\left[ {kx\left[ {N_{\nu  -
2} \left( {kx} \right) - 2N_\nu  \left( {kx} \right) + N_{\nu  + 2}
\left( {kx} \right)} \right] + 2\left[ {N_{\nu  - 1} \left( {kx}
\right) - N_{\nu  + 1} \left( {kx} \right)} \right]} \right]J_\nu
\left( {kx} \right)} \right. \\
\quad \left. {\quad \quad \quad \;\; - kx\left[ {J_{\nu  - 1} \left(
{kx} \right) - J_{\nu  + 1} \left( {kx} \right)} \right]\left[ {N_{\nu
- 1} \left( {kx} \right) - N_{\nu  + 1} \left( {kx} \right)} \right]}
\right\} \\
\end{array}
\label{eq60}
\end{equation}

As it can be seen, all elements of the Kernel matrix vanish at
$x = 0$; this ensures cylindrical symmetry of the
radial function.

It should be also noticed that since Bessel functions with non-integer
order are well-defined \cite{ref14}

\begin{equation}
J_\alpha  \left( x \right) = \sum\limits_{m = 0}^\infty
{\frac{{\left( { - 1} \right)^m }}{{m!\Gamma \left( {m + \alpha  + 1}
\right)}}} \left( {\frac{x}{2}} \right)^{2m + \alpha }
\label{eq61}
\end{equation}

One could have also replaced the eigenvalue function with the order as

\begin{equation}
x^2 \psi ''\left( x \right) + x\psi '\left( x \right) + \left[ {x^2  -
k^2 \left( x \right)} \right]\psi \left( x \right) = 0
\label{eq62}
\end{equation}

\noindent
Hence

\begin{equation}
\mathbb{H}_k  = x^2 \frac{{\partial ^2 }}{{\partial x^2 }} +
x\frac{\partial }{{\partial x}} + \left( {x^2  - k^2 } \right)\quad
\quad \mathbb{L}_{k\left( x \right)}  = x^2 \frac{{d^2 }}{{dx^2 }} +
x\frac{d}{{dx}} + \left[ {x^2  - k^2 \left( x \right)} \right]
\label{eq63}
\end{equation}

\noindent
Again since $J_\alpha  \left( x \right)$ and
$J_{ - \alpha } \left( x \right)$ for non-integer
$\alpha $ are linearly independent, the natural
choice of eigenfunctions based on Bessel functions would be either

\begin{equation}
A\left[ {x;k\left( x \right)} \right] = J_{ + k\left( x \right)}
\left( x \right)\quad B\left[ {x;k\left( x \right)} \right] = J_{ -
k\left( x \right)} \left( x \right)
\label{eq64}
\end{equation}

\noindent
or

\begin{equation}
A\left[ {x;k\left( x \right)} \right] = J_{k\left( x \right)} \left( x
\right)\quad B\left[ {x;k\left( x \right)} \right] = N_{k\left( x
\right)} \left( x \right)
\label{eq65}
\end{equation}

The latter pair has the same Wronskian of $W\left[
{x;k\left( x \right)} \right] = {2 \mathord{\left/
{\vphantom {2 {\pi x}}} \right.
\kern-\nulldelimiterspace} {\pi x}}$. Hence, the elements of
the Kernel matrix are

\begin{equation}
\begin{array}{l}
u_{11} \left( x \right) = \frac{{\pi xk'\left( x \right)}}{2}\left\{
{\frac{{\partial ^2 J_{k\left( x \right)} \left( x \right)}}{{\partial
x\partial k}}N_{k\left( x \right)} \left( x \right) - \frac{{\partial
J_{k\left( x \right)} \left( x \right)}}{{\partial k}}\frac{{\partial
N_{k\left( x \right)} \left( x \right)}}{{\partial x}}} \right\} \\
u_{12} \left( x \right) = \frac{{\pi xk'\left( x \right)}}{2}\left\{
{\frac{{\partial ^2 N_{k\left( x \right)} \left( x \right)}}{{\partial
x\partial k}}N_{k\left( x \right)} \left( x \right) - \frac{{\partial
N_{k\left( x \right)} \left( x \right)}}{{\partial k}}\frac{{\partial
N_{k\left( x \right)} \left( x \right)}}{{\partial x}}} \right\} \\
u_{21} \left( x \right) = \frac{{\pi xk'\left( x \right)}}{2}\left\{
{\frac{{\partial J_{k\left( x \right)} \left( x \right)}}{{\partial
k}}\frac{{\partial J_{k\left( x \right)} \left( x \right)}}{{\partial
x}} - \frac{{\partial ^2 J_{k\left( x \right)} \left( x
\right)}}{{\partial x\partial k}}J_{k\left( x \right)} \left( x
\right)} \right\} \\
u_{22} \left( x \right) = \frac{{\pi xk'\left( x \right)}}{2}\left\{
{\frac{{\partial J_{k\left( x \right)} \left( x \right)}}{{\partial
x}}\frac{{\partial N_{k\left( x \right)} \left( x \right)}}{{\partial
k}} - J_{k\left( x \right)} \left( x \right)\frac{{\partial ^2
N_{k\left( x \right)} \left( x \right)}}{{\partial x\partial k}}}
\right\} \\
\end{array}
\label{eq66}
\end{equation}

\subsection{Euler-Cauchy Equation}

The Euler-Cauchy equation reads

\begin{equation}
x^2 \psi ''\left( x \right) + x\psi '\left( x \right) - k^2 \left( x
\right)\psi \left( x \right) = 0
\label{eq67}
\end{equation}

\noindent Therefore, the corresponding operators are given by

\begin{equation}
\mathbb{H}_k  = x^2 \frac{{\partial ^2 }}{{\partial x^2 }} +
x\frac{\partial }{{\partial x}} - k^2 \quad \mathbb{L}_{k\left( x
\right)}  = x^2 \frac{{d^2 }}{{dx^2 }} + x\frac{d}{{dx}} - k^2 \left( x
\right)
\label{eq68}
\end{equation}

\noindent When $k\left( x \right)$ is a constant, solutions
of (\ref{eq67}) are given by linear combinations of  $x^{ + k} $
and $x^{ - k} $. Thus, the extended basis functions
are

\begin{equation}
A\left[ {x;k\left( x \right)} \right] = x^{ + k\left( x \right)} \quad
B\left[ {x;k\left( x \right)} \right] = x^{ - k\left( x \right)}
\label{eq69}
\end{equation}

\noindent
with the Wronskian $W\left[ {x;k\left( x \right)} \right] =
 - {{2k\left( x \right)} \mathord{\left/
{\vphantom {{2k\left( x \right)} x}} \right.
\kern-\nulldelimiterspace} x}$, which vanishes at
$k\left( x \right) = 0$. The Kernel matrix
simplifies to the convenient form

\begin{equation}
\left[ {U\left( x \right)} \right] =  - \frac{{k'\left( x
\right)}}{{2k\left( x \right)}}\left[ {\begin{array}{*{20}c}
{1 + 2k\left( x \right)\ln x} & { - x^{ - 2k\left( x \right)} }  \\
{ - x^{2k\left( x \right)} } & {1 - 2k\left( x \right)\ln x}  \\
\end{array}} \right]
\label{eq70}
\end{equation}

\subsection{Approximate Solution}

The exact matrix exponential in (\ref{eq23})
can be evaluated if off-diagonal elements of the Kernel matrix could
be dropped. This has been previously shown to lead to the well-known
WKB solution \cite{ref2b,ref4,ref5}. Under such conditions, the approximate differential
transfer matrix takes the form

\begin{equation}
\begin{array}{l}
\left[ {Q^{x_1  \to x_2 } } \right] = \mathbb{T}\exp \left\{
{\int\limits_{x_1 }^{x_2 } {\left[ {U\left( x \right)} \right]dx} }
\right\} \approx \mathbb{T}\exp \left\{ {\int\limits_{x_1 }^{x_2 }
{\left[ {\begin{array}{*{20}c}
{u_{11} \left( x \right)} & 0  \\
0 & {u_{22} \left( x \right)}  \\
\end{array}} \right]dx} } \right\} \\
\quad \quad \; = \left[ {\begin{array}{*{20}c}
{\exp \left[ {\int\limits_{x_1 }^{x_2 } {u_{11} \left( x \right)dx}
} \right]} & 0  \\
0 & {\exp \left[ {\int\limits_{x_1 }^{x_2 } {u_{22} \left( x
\right)dx} } \right]}  \\
\end{array}} \right] \\
\end{array}
\label{eq71}
\end{equation}

\subsection{Periodic Perturbations}

There is a great deal of simplification possible, when the eigenvalue
function $k\left( x \right) = k\left( {x + L}
\right)$ is periodic for some $L > 0$. Let
$\mathbb{T}_L $ be the translation operator, defined
as $\mathbb{T}_L h\left( x \right) = h\left( {x + L}
\right)$, for all arbitrary functions $h\left( x
\right)$. Hence, we readily have the commutation property
$\left[ {\mathbb{L}_{k\left( x \right)} ,\mathbb{T}_L } \right]
= 0$, and hence these two operators share identical
eigenfunctions. Since $\mathbb{L}_{k\left( x \right)}
$ is supposed to be linear, Bloch-Floquet theorem \cite{ref6}
applies and then any solution will take the form of Bloch
eigenfunctions

\begin{equation}
\mathbb{T}_L f\left[ {x;k\left( x \right)} \right] = f\left[ {x +
L;k\left( {x + L} \right)} \right] = \exp \left( { - j\kappa L}
\right)f\left[ {x;k\left( x \right)} \right]
\label{eq72}
\end{equation}

\noindent
in which the complex number $\kappa $ is being
referred to as the Bloch number. This shows that $f\left[
{x;k\left( x \right)} \right]$ is an eigenfunction of the
translation operator $\mathbb{T}_L $ with the
eigenvalue $\exp \left( { - j\kappa L} \right)$.
Based on (\ref{eq72}), we furthermore have

\begin{equation}
f\left( x \right) = \exp \left( { - j\kappa x} \right)g_\kappa  \left(
x \right)
\label{eq73}
\end{equation}

\noindent
with the envelope function satisfying

\begin{equation}
\mathbb{T}_L g_\kappa  \left( x \right) = g_\kappa  \left( x \right)
\label{eq74}
\end{equation}

Under such circumstances, it is easy to obtain the characteristic
equation of eigenfunctions. From (\ref{eq5}) and (\ref{eq72})
we have

\begin{equation}
\begin{array}{l}
\mathbb{T}_L f\left[ {x;k\left( x \right)} \right] = a\left( {x + L}
\right)A\left[ {x + L;k\left( x \right)} \right] + b\left( {x + L}
\right)B\left[ {x + L;k\left( x \right)} \right] \\
\quad \quad \quad \;\;\, = \exp \left( { - j\kappa L} \right)\left\{
{a\left( x \right)A\left[ {x;k\left( x \right)} \right] + b\left( x
\right)B\left[ {x;k\left( x \right)} \right]} \right\} \\
\end{array}
\label{eq75}
\end{equation}

\noindent
By taking differentiating with respect to $x$ from
(\ref{eq72}) and derivative lemma (\ref{eq30})
we also get

\begin{equation}
\begin{array}{l}
\mathbb{T}_L f'\left[ {x;k\left( x \right)} \right] = \exp \left( { -
j\kappa L} \right)f'\left[ {x;k\left( x \right)} \right] \\
\quad \quad \quad \;\;\; = \exp \left( { - j\kappa L} \right)\left\{
{a\left( x \right)A_x \left[ {x;k\left( x \right)} \right] + b\left( x
\right)B_x \left[ {x;k\left( x \right)} \right]} \right\} \\
\end{array}
\label{eq76}
\end{equation}

We furthermore notice that (\ref{eq75})
and (\ref{eq76}) are actually identities which hold for all $x$.
These two can be combined to get the system of equations

\begin{equation}
\begin{array}{l}
\left[ {\begin{array}{*{20}c}
{A\left[ {x + L;k\left( x \right)} \right]} & {B\left[ {x +
L;k\left( x \right)} \right]}  \\
{A_x \left[ {x + L;k\left( x \right)} \right]} & {B_x \left[ {x +
L;k\left( x \right)} \right]}  \\
\end{array}} \right]\left\{ {C\left( {x + L} \right)} \right\} =  \\
\quad \quad \exp \left( { - j\kappa L} \right)\left[
{\begin{array}{*{20}c}
{A\left[ {x;k\left( x \right)} \right]} & {B\left[ {x;k\left( x
\right)} \right]}  \\
{A_x \left[ {x;k\left( x \right)} \right]} & {B_x \left[ {x;k\left(
x \right)} \right]}  \\
\end{array}} \right]\left\{ {C\left( x \right)} \right\} \\
\end{array}
\label{eq77}
\end{equation}

\noindent
which allows the solution

\begin{equation}
\begin{array}{l}
\mathbb{T}_L \left\{ {C\left( x \right)} \right\} = \frac{{\exp \left(
{ - j\kappa L} \right)}}{{W\left[ {x + L;k\left( x \right)}
\right]}}\left[ {\begin{array}{*{20}c}
{B_x \left[ {x + L;k\left( x \right)} \right]} & { - B\left[ {x +
L;k\left( x \right)} \right]}  \\
{ - A_x \left[ {x + L;k\left( x \right)} \right]} & {A\left[ {x +
L;k\left( x \right)} \right]}  \\
\end{array}} \right] \times  \\
\quad \quad \quad \quad \quad \quad \quad \quad \quad \left[
{\begin{array}{*{20}c}
{A\left[ {x;k\left( x \right)} \right]} & {B\left[ {x;k\left( x
\right)} \right]}  \\
{A_x \left[ {x;k\left( x \right)} \right]} & {B_x \left[ {x;k\left(
x \right)} \right]}  \\
\end{array}} \right]\left\{ {C\left( x \right)} \right\} \\
\quad \quad \quad \; = \exp \left( { - j\kappa L} \right)\left[ V
\right]\left\{ {C\left( x \right)} \right\} \\
\end{array}
\label{eq78}
\end{equation}

\noindent But from (\ref{eq26})

\begin{equation}
\mathbb{T}_L \left\{ {C\left( x \right)} \right\} = \left[ {Q^{x \to x +
L} } \right]\left\{ {C\left( x \right)} \right\}
\label{eq79}
\end{equation}

\noindent
Simultaneous satisfaction of (\ref{eq77}) and (\ref{eq78})
requires that

\begin{equation}
\left| {\exp \left( { - j\kappa L} \right)\left[ V \right] - \left[ Q
\right]} \right| = 0
\label{eq80}
\end{equation}

\noindent
In other words, we should have

\begin{equation}
\kappa  = \frac{j}{L}\ln \left\{ {{\mathop{\rm eig}\nolimits} \left[ P
\right]} \right\}
\label{eq81}
\end{equation}

\noindent
in which $\left[ P \right] = \left[ V \right]^{ - 1} \left[
Q \right]$. Here, constancy of the Bloch wavenumber
$\kappa $ is guaranteed by Bloch-Floquet theorem,
and is rigorously shown to hold for the example of extended wave
equation (\ref{eq49}) with periodic wavenumber in \cite{ref6}.

\section{Conclusions}

In this paper, we presented a new analytical solution obtained by
differential transfer matrix method to a wide class of second-order
linear differential equations, which are extended from eigenvalue
problems by replacing the eigenvalue with an arbitrary eigenvalue
function. We presented the details of the method and a fundamental
theorem to rigorously establish the mathematical formulation. Few
examples were also described.

\end{document}